\newtheorem{theorem}{Theorem}
\def\be{\begin{equation}}
\def\ee{\end{equation}}
\definecolor{burntorange}{rgb}{0.8, 0.33, 0.0}
\newcommand{\doublewidetilde}[1]{{%
  \mathpalette\double@widetilde{#1}%
}}
\newcommand{\double@widetilde}[2]{%
  \sbox\z@{$\m@th#1\widetilde{#2}$}%
  \ht\z@=.9\ht\z@
  \widetilde{\box\z@}%
}
\crefname{section}{Sec.}{Secs.}
\Crefname{section}{Section}{Sections}
\definecolor{darkgreen}{rgb}{0.0, 0.6, 0.13}
\definecolor{dkblue}{rgb}{0,0,0.5}
\def\ket#1{\left|#1\right\rangle}
\def\braket#1{\left\langle#1\right\rangle}
\newcommand*{\addFileDependency}[1]{
  \typeout{(#1)}
  \@addtofilelist{#1}
  \IfFileExists{#1}{}{\typeout{No file #1.}}
}
\begin{document}

\title{Unraveling Quantum Environments: Transformer-Assisted Learning in Lindblad Dynamics}

\author{Chi-Sheng Chen}
\affiliation{Neuro Industry Research, Neuro Industry, Inc., Boston, MA, USA}

\author{En-Jui Kuo}
\affiliation{Department of Electrophysics, National Yang Ming Chiao Tung University, Hsinchu, Taiwan, R.O.C.}

\begin{abstract}
Understanding dissipation in open quantum systems is crucial for the development of robust quantum technologies. In this work, we introduce a Transformer-based machine learning framework to infer time-dependent dissipation rates in quantum systems governed by the Lindblad master equation. Our approach uses time series of observable quantities, such as $\langle \sigma_x(t) \rangle$, $\langle \sigma_y(t) \rangle$, and $\langle \sigma_z(t) \rangle$, as input to learn dissipation profiles without requiring knowledge of the initial quantum state and even Hamiltonian.

We demonstrate the effectiveness of our approach on a hierarchy of open quantum models of increasing complexity, including single-qubit systems with time independent or time dependent jump rates, two-qubit interacting systems (e.g., Heisenberg and transverse Ising models), and the Jaynes–Cummings model involving light–matter interaction and cavity loss with time dependent decay rates. Our method accurately reconstructs both fixed and time-dependent decay rates from observable time series. To support this, we prove that under reasonable assumptions, the jump rates in all these models are uniquely determined by a finite set of observables, such as qubit and photon measurements. In practice, we combine Transformer-based architectures with lightweight feature extraction techniques to efficiently learn these dynamics. Our results suggest that modern machine learning tools can serve as scalable and data-driven alternatives for identifying unknown environments in open quantum systems.
\end{abstract}
\maketitle

\section{Introduction}

Open quantum systems play a fundamental role in quantum information, condensed matter physics, and quantum thermodynamics~\cite{breuer2016colloquium, rivas2012open}. Unlike closed systems, which evolve unitarily under the Schrödinger equation, open quantum systems interact with external environments, resulting in non-unitary dynamics. Environmental coupling introduces dissipation and decoherence, making the analysis of such systems essential for understanding real-world quantum technologies~\cite{sarandy2005adiabatic}.

In the weak coupling regime, the dynamics of an open system's density matrix $\rho$ are often modeled by the Lindblad master equation~\cite{breuer2016colloquium}:
\begin{align}\label{eq:lin}
    \dot{\rho} = -i[H,\rho] + \sum_{i} \gamma_{i} \left( L_{i} \rho L_{i}^{\dagger } - \frac{1}{2} \{L_{i}^{\dagger} L_{i}, \rho \} \right),
\end{align}
where $H$ governs the unitary evolution, $L_i$ are jump operators that encode system–environment interactions, and $\gamma_i \geq 0$ are decay rates that quantify the coupling strength.

In many experimental settings, both the jump operators and the dissipation rates $\gamma_i$ can be complex or time-dependent~\cite{banchi2018modelling, fujimoto2022impact, yin2014nonequilibrium, barontini2013controlling}. This uncertainty motivates a sensing-based approach: embed a known quantum system (e.g., a qubit or photonic state) into an unknown environment and infer the dissipation parameters from observable dynamics. This concept underlies quantum sensing protocols~\cite{degen2017quantum}, which have been demonstrated on platforms including superconducting flux qubits~\cite{bylander2011noise}, photonic systems~\cite{wang2020quantum}, and trapped ions~\cite{wolf2021quantum}.

In this work, we explore whether such observable time-series data can be used to infer environmental properties, and propose using modern sequence modeling architectures—specifically the Transformer~\cite{vaswani2017attention, li2019enhancing, zerveas2021transformer, lim2021temporal}—to learn time-dependent dissipation rates in open quantum systems.

We begin with the single-qubit case, where the jump operators are fixed but the dissipation rate $\gamma(t)$ is unknown. The model input consists of local observable trajectories: $\langle \sigma_x(t) \rangle$, $\langle \sigma_y(t) \rangle$, and $\langle \sigma_z(t) \rangle$. We demonstrate that a Transformer-based regression model can accurately reconstruct the time-dependent $\gamma(t)$ from these observables.

We then extend our analysis to two-qubit systems with interacting Hamiltonians. In particular, we consider the anisotropic Heisenberg model (realized in quantum dot arrays~\cite{qiao2020coherent, zhou2024exchange, hu2003overview, chan2021exchange, kandel2019coherent}) and the transverse-field Ising model (implemented in ion-trap and Rydberg platforms). In both cases, the Transformer model successfully predicts the dissipation rates associated with each qubit, showing strong generalization across coupling regimes.

To further test the method in hybrid quantum systems, we apply it to the open Jaynes–Cummings model~\cite{shore1993jaynes}, which describes light–matter interaction between a qubit and a photon mode. Here, the model predicts six time-dependent coefficients parameterizing the dissipation rates $\kappa(t)$ and $\gamma(t)$, based on the observable sequences $\langle \sigma_s(t) \rangle, s\in \{x,y,z \}$, $\langle a^\dagger a(t) \rangle$, and $\langle X \otimes \sigma_z(t) \rangle$.

In all cases, the Transformer model achieves high predictive accuracy. We summarize full architectural and training details in the Supplemental Material [\onlinecite{SM}, Sec.B-D]. Moreover, we provide theoretical results justifying the identifiability of dissipation rates from observable trajectories under mild assumptions. Analytical derivations for both the single-qubit Lindblad case and the Jaynes–Cummings model are presented in Supplemental Material [\onlinecite{SM}, Sec.~E].

Our results suggest that data-driven sequence models can serve as effective and scalable tools for extracting physical parameters in open quantum systems, offering a robust alternative to analytic inversion or system identification methods.

\subsection{MACHINE LEARNING METHOD BASICS}
 In recent years, machine learning has emerged as a powerful tool across various domains of physics \cite{carleo2019machine}. Applications include phase classification in condensed matter systems \cite{wetzel2017unsupervised,van2017learning, kuo2022decoding, kuo2022unsupervised}, modeling non-equilibrium dynamics \cite{van2018learning,schindler2017probing}, exploring the string theory landscape and holographic dualities \cite{carifio2017machine,PhysRevD.98.046019}, simulating quantum many-body systems \cite{carleo2017solving}, quantum state tomography \cite{torlai2018neural,carrasquilla2019reconstructing}, enhancing quantum hardware performance \cite{torlai2019integrating}, and quantum machine learning \cite{liu2024training, liu2024quantum}. In molecular physics and chemistry, machine learning has also been applied to accelerate molecular dynamics simulations \cite{tsai2020learning, tsai2022path}.
 
When specilize in time series prediction or Natural language processing, a wide array of architectures have been developed to capture temporal correlations, starting with Recurrent Neural Networks (RNNs) \cite{rumelhart1986learning}, Gated Recurrent Units (GRUs) \cite{cho2014learning}, and several gated variants like Long Short-Term Memory (LSTM) networks \cite{hochreiter1997long}. More recently, the Transformer architecture \cite{vaswani2017attention} has revolutionized the field by replacing recurrence with self-attention mechanisms, enabling scalable modeling of long-range dependencies. Its success in NLP has inspired applications across fields, including physics \cite{huang2023physicsformer,vaswani2021scaling,li2022physics}.

\section{Model and Quantities of Interest}\label{s:model}

We now describe the learning task and modeling setup. As discussed in Eq.~\eqref{eq:lin}, the dynamics of open quantum systems are governed by the Lindblad master equation~\cite{breuer2002theory, pearle2012simple}, where $H$ is the system Hamiltonian, $L_i$ are jump operators, and $\gamma_i(t) \geq 0$ are time-dependent dissipation rates. This form ensures the dynamics remain completely positive and trace preserving (CPTP)~\cite{nielsen2002quantum}. Conversely, any divisible CPTP map admits a representation of this structure.

We assume that both the Hamiltonian $H$ and jump operators $\{L_i\}$ are known, but the dissipation rates $\{\gamma_i(t)\}$ are unknown. Our goal is to learn these rates from observable time-series data. For example, consider a single-qubit system with Hamiltonian $H = \sigma_z$ and a single jump operator $\sigma_-$ with time-dependent dissipation $\gamma(t)$:
\begin{equation}
\dot{\rho} = -i[H,\rho] + \gamma(t) \left( \sigma_- \rho \sigma_+ - \frac{1}{2} \{ \sigma_+ \sigma_-, \rho \} \right).
\end{equation}
If the system is initialized in multiple states and we record observables such as $\langle \sigma_x(t) \rangle$, $\langle \sigma_y(t) \rangle$, and $\langle \sigma_z(t) \rangle$ over time, then in principle $\gamma(t)$ can be reconstructed, as the Pauli observables span the full space of Hermitian operators on a qubit. However, the exact inversion formulas are analytically involved; see Supplemental Material [\onlinecite{SM}, Sec.~E] for explicit derivations.

To proceed, we cast the reconstruction of $\gamma(t)$ as a regression problem. The input consists of observable trajectories generated from random initial states, and the output is a finite-dimensional parameterization of $\gamma(t)$. To represent the time-dependent dissipation profile, we adopt the Bernstein polynomial basis of fixed degree $n$. This basis offers a simple yet expressive representation for non-negative functions on $[0,1]$. Specifically, the Bernstein basis of degree $n$ consists of:
\begin{equation}
b_{j,n}(t) = \binom{n}{j} t^j (1 - t)^{n-j}, \quad j = 0, 1, \ldots, n,
\end{equation}
and any non-negative linear combination $\gamma(t) = \sum_{j=0}^{n} a_j b_{j,n}(t)$ with $a_j \geq 0$ guarantees $\gamma(t) \geq 0$ on $[0,1]$. This avoids the sign-indefiniteness inherent in monomial bases (e.g., $\{1, t, t^2\}$), and enables random sampling or learning over the coefficient space. A detailed discussion of the Bernstein basis and its approximation properties is provided in Supplemental Material [\onlinecite{SM}, Sec.~A].

The learning task is thus reduced to predicting the coefficient vector $\{a_0, \dots, a_n\}$ from time-series input features. In our implementation, we use a Transformer model trained on multiple trajectories to output these coefficients. The model is trained using a supervised loss over many initial states and dissipation profiles.

An overview of the entire pipeline is depicted in Fig.~\ref{fig:transformer_schematic}. Observable time series such as $\langle \sigma_x(t) \rangle$, $\langle \sigma_y(t) \rangle$, and $\langle \sigma_z(t) \rangle$ are encoded as fixed-length feature vectors. The Transformer learns to regress the dissipation profile by predicting the corresponding Bernstein coefficients. This framework will be applied in the next sections to various systems, including single-qubit models, two-qubit interactions, and the Jaynes–Cummings model.

\begin{figure}[t]
    \centering
    \includegraphics[width=0.48\textwidth]{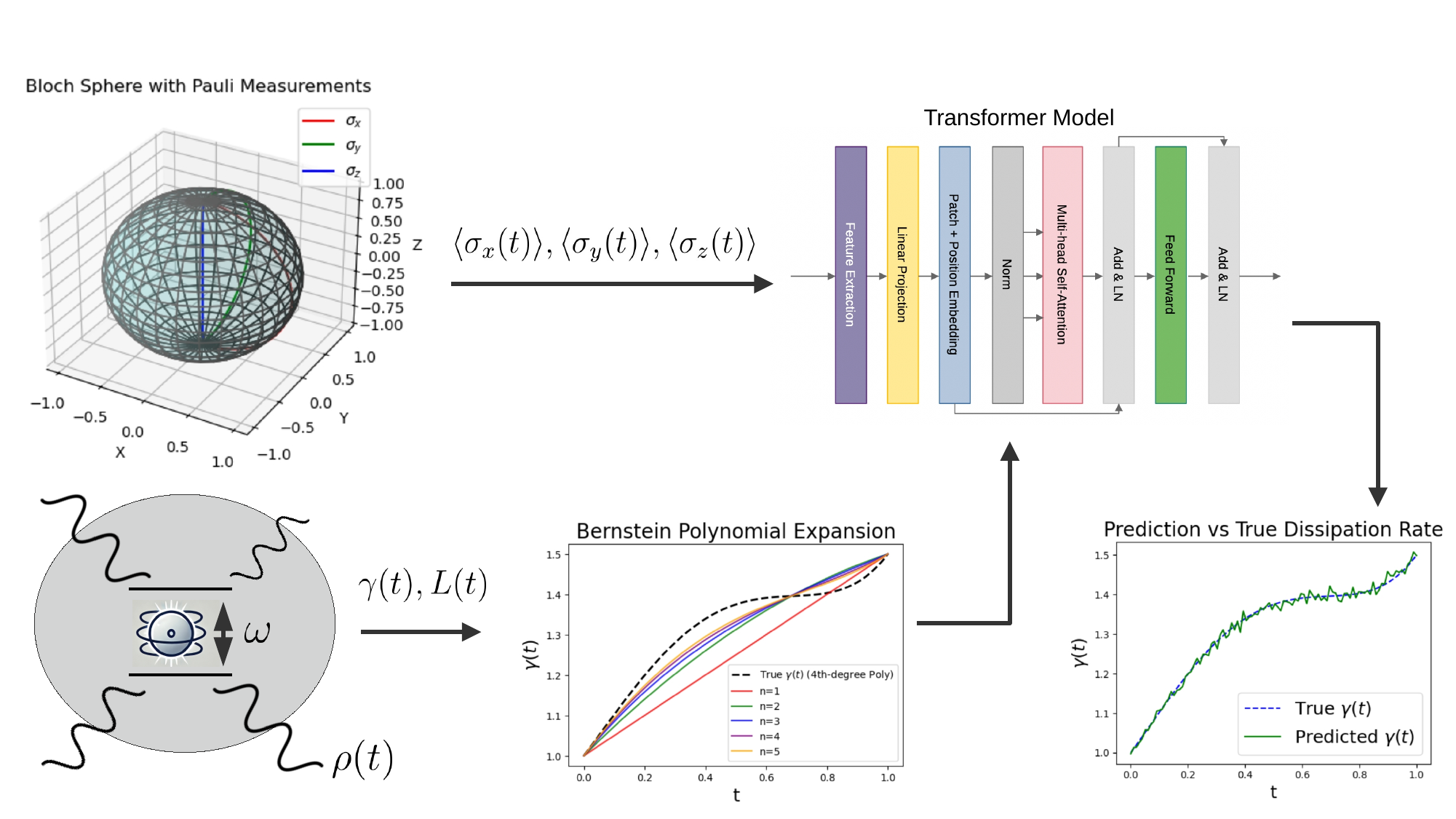}
    \caption{
    Schematic illustration of the machine learning pipeline. Time-series data of local observables $\langle \sigma_{x,y,z}(t) \rangle$ are used as input features to a Transformer model trained to regress the time-dependent dissipation rate $\gamma(t)$. The dissipation profile is parameterized using a Bernstein polynomial expansion. This approach is applied to single- and two-qubit systems, as well as light–matter interaction models.
    }
    \label{fig:transformer_schematic}
\end{figure}

\section{Examples}

In this section, we apply our framework to several concrete systems and summarize the setup of each in Table~\ref{table:ex}.

\subsection{Single-Qubit Dissipative Systems}

We begin with four representative examples involving single-qubit open dynamics. The first case considers a simple Hamiltonian $H = \sigma_z$ with a constant but unknown dissipation rate $\gamma_-$ associated with the jump operator $\sigma_-$. The goal is to predict the scalar value of $\gamma_-$ from the observable trajectory $\langle \sigma_z(t) \rangle$.

The second case extends this setup to time-dependent dissipation: $\gamma_-(t)$ is parameterized by a second-order Bernstein polynomial. In both the first and second examples, the system is initialized in the ground state $\ket{0}$.

In the third and fourth examples, the initial state is drawn randomly from the Haar measure over pure qubit states. This is achieved by sampling the components $\alpha_j$ from independent complex Gaussian distributions $\mathbb{C}\mathcal{N}(0,1)$ and normalizing the resulting vector. These random initializations provide a more general testing ground for the model's robustness across input variability.

In all four cases, we generate 1000 simulation trajectories using the QuTiP package~\cite{johansson2012qutip}, each with randomly sampled dissipation rates. The input features consist of observable time series, either $\langle \sigma_z(t) \rangle$ or $\langle \sigma_{x,y,z}(t) \rangle$ depending on the initial state. The model is trained to regress either the constant $\gamma$ or the Bernstein coefficients if $\gamma(t)$ is time-dependent. The dataset is split into 80\% training and 20\% testing.

Model architecture, feature extraction techniques, and additional training details are provided in Supplemental Material [\onlinecite{SM}, Sec.~B]. Results for the first three examples are shown in Fig.~\ref{fig:single-qubit}. The top panel corresponds to the first example (constant $\gamma_-$), the second and third panels illustrate the performance on time-dependent $\gamma_-(t)$ with deterministic and random initial states, respectively. The $R^2$ scores in all three cases exceed $0.99$, indicating excellent predictive performance.

The fourth example introduces two time-dependent dissipation channels: $\gamma_+(t)$ and $\gamma_-(t)$, corresponding to jump operators $\sigma_+$ and $\sigma_-$. Each is represented using a second-order Bernstein basis. This results in a six-dimensional output space. Due to space constraints, results for this case are included in the Supplemental Material [\onlinecite{SM}, Sec.~B].

\begin{table*}[hbt]
    \centering
    \begin{tabular}{|c|c|c|c|c|}
        \hline
        \textbf{Hilbert Space}& \textbf{Hamiltonian} & \textbf{Initial}  &
        \textbf{Jump operator $L$, rate $\gamma$} & \textbf{Input}
       \\ \hline
      $\mathbb{C}^2$ &  $\sigma_z$                    & $\ket{0}$ & ($\sigma_{-}, \gamma_{-}$)  & $\braket{\sigma_z(t)}$\\ \hline
   $\mathbb{C}^2$   & $\sigma_z$                 & Haar random & ($\sigma_{-}, \gamma_{-}$), ($\sigma_{+}, \gamma_{+}$) &  $\braket{\sigma_i(t)}, i \in \{x,y,z\}$  \\ \hline
   $\mathbb{C}^2$  & $\sigma_z$                 &  $\ket{0}$ & ($\sigma_{-}, \gamma_{-}(t)$) & $\braket{\sigma_z(t)}$\\ \hline
 $\mathbb{C}^2$     &  $\sigma_z$     & Haar random  & ($\sigma_{-}, \gamma_{-}(t)$), ($\sigma_{+}, \gamma_{+}(t)$)& $\braket{\sigma_i(t)}, i \in \{x,y,z\}$  \\ \hline
  $\mathbb{C}^2\otimes \mathbb{C}^2$    &  $J(\sigma_x^{(1)}\sigma_x^{(2)} +\sigma_y^{(1)}\sigma_y^{(2)} +\sigma_z^{(1)}\sigma_z^{(2)} )$     & Haar random  & ($\sigma_{-,j}, \gamma_{-,j}$), ($\sigma_{+,j}, \gamma_{+,j}$)& $\braket{\sigma_i^j(t)}, i \in \{x,y,z\}, j\in\{ 1,2\}$  \\ \hline
    $\mathbb{C}^2\otimes \mathbb{C}^2$    &  $J \sigma_z^{(1)}\sigma_z^{(2)} + h (\sigma_x^{(1)} + \sigma_x^{(2)})$     & Haar random  & ($\sigma_{-,j}, \gamma_{-,j}$), ($\sigma_{+,j}, \gamma_{+,j}$)& $\braket{\sigma_i^j(t)}, i \in \{x,y,z\}, j\in\{ 1,2\}$ \\ \hline
     $\mathbb{C}^2\otimes \mathcal{H}_{p}$   &  $\omega_c a^\dagger a + \frac{\omega_q}{2} \sigma_z + g (\sigma_+ a + \sigma_- a^\dagger)$     & Haar random $\otimes \ket{n}$  & ($\sigma_{-}, \gamma_{-}(t)$), ($a, \kappa(t)$) & $\langle \sigma_i(t) \rangle, i\in \{x,y,z \}, \braket{n(t)}, \braket{\sigma_z \otimes X_{photon}(t)}$\\ \hline
    \end{tabular}
    \caption{Details of examples shown in the single-qubit space $\mathbb{C}^2$. For the first and third examples, we pick $\gamma_+, \gamma_{-}$ uniformly at random in $(0,2)$. For the third and fourth examples, we pick $\gamma_{\pm}(t) = \sum_{j=0}^{2} \gamma_{\pm, j} b_{j,2}(t)$, where $b_{j,2}(t)$ is the $j$-th Bernstein polynomial of degree $2$, with all $\gamma_{\pm, j} \in (0.1, 2)$. We denote $\mathcal{H}_{p}$ to be the Hilbert space of single photon.}
    \label{table:ex}
\end{table*}
\begin{figure}[t]
    \centering
    \begin{subfigure}[b]{0.48\textwidth}
        \includegraphics[width=\textwidth]{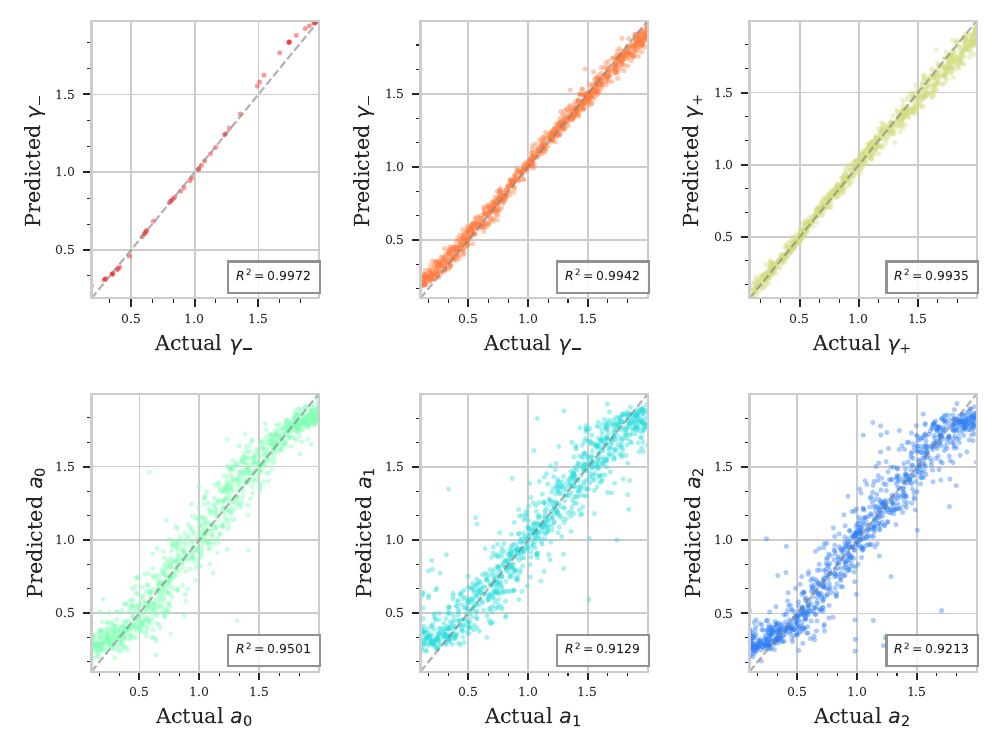}
        \caption{Single-qubit dissipation system}
        \label{fig:single-qubit}
    \end{subfigure}
    \hfill
    \begin{subfigure}[b]{0.48\textwidth}
        \includegraphics[width=\textwidth]{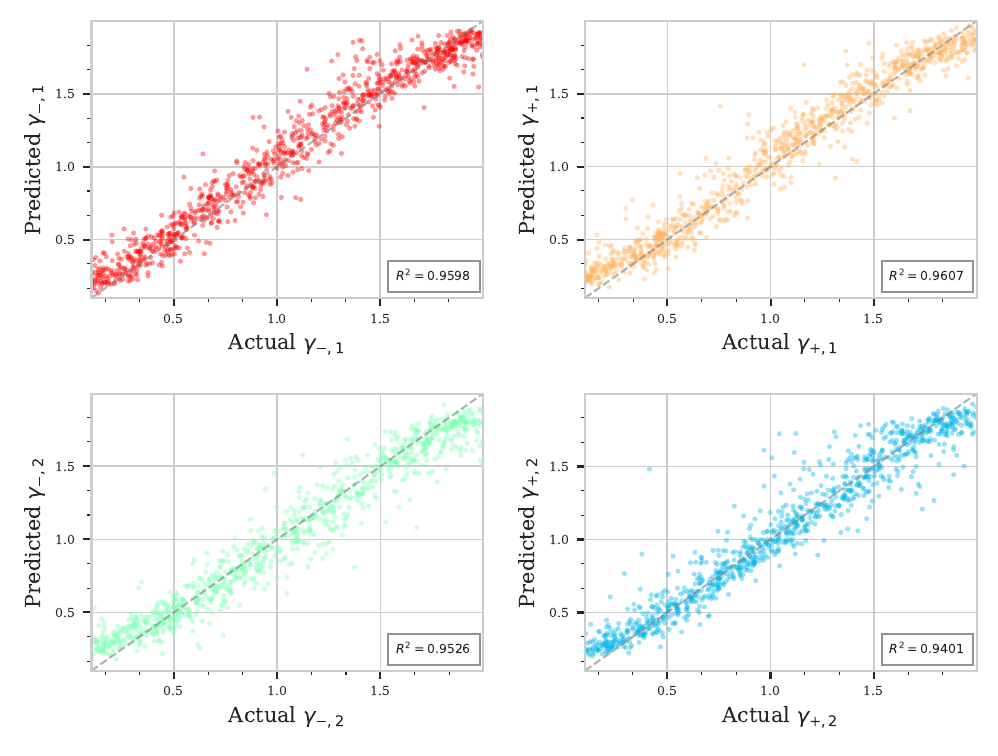}
        \caption{Two-qubit dissipation system (Heisenberg model)}
        \label{fig:two-qubit}
    \end{subfigure}
    \caption{
    Predictions of dissipation rates for single- and two-qubit systems. (a) shows results from the first three single-qubit examples in Table~\ref{table:ex}. The top row illustrates prediction of time-dependent dissipation rates. The first panel corresponds to Example~1, with a single channel $\gamma_-(t)$ and jump operator $\sigma_-$. The second and third panels correspond to Example~2, which includes both $\gamma_+(t)$ and $\gamma_-(t)$ associated with jump operators $\sigma_+$ and $\sigma_-$. The bottom row shows the predicted Bernstein coefficients $a_0$, $a_1$, and $a_2$ for Example~3, where $\gamma_-(t) = \sum_{k=0}^{2} a_k b_{k,2}(t)$ is expanded in the Bernstein basis. (b) shows the predicted dissipation rates for the Heisenberg two-qubit model, involving four constant rates $\gamma_{\pm,j}$ acting on each qubit $j = 1,2$.
    }
    \label{fig:combined-qubit-results}
\end{figure}

\subsection{Two-Qubit Dissipative Systems}

After validating our approach on single-qubit systems, we extend the framework to two-qubit dissipative systems with constant pumping and loss channels. We consider two widely studied Hamiltonians:

\begin{itemize}
    \item The Heisenberg exchange Hamiltonian: 
    \[
    H = J(\sigma_x^{(1)}\sigma_x^{(2)} + \sigma_y^{(1)}\sigma_y^{(2)} + \sigma_z^{(1)}\sigma_z^{(2)}),
    \]
    relevant to quantum dot architectures~\cite{qiao2020coherent, zhou2024exchange, hu2003overview, chan2021exchange, kandel2019coherent}.
    
    \item The transverse-field Ising model: 
    \[
    H = J \sigma_z^{(1)}\sigma_z^{(2)} + h(\sigma_x^{(1)} + \sigma_x^{(2)}),
    \]
    commonly realized in superconducting circuits, trapped ions, and cold-atom platforms~\cite{sachdev2011quantum, islam2011onset, zhang2017observation}.
\end{itemize}

Both models serve as minimal testbeds for studying dissipation, entanglement, and open-system dynamics under tunable interactions. For the Heisenberg model, we sample $J \in (0,2)$ uniformly. For the transverse-field Ising model, we sample $(J,h) \in (0.1,2)^2$ and assign four dissipation rates $\gamma_{\pm,j} \in (0.1,2)$ for $j=1,2$, corresponding to jump operators $\sigma_{\pm,j}$ acting on each qubit. 

The model input consists of local observables $\langle \sigma_{x,y,z}^{(1)}(t) \rangle$ and $\langle \sigma_{x,y,z}^{(2)}(t) \rangle$, and the regression target is the set of unknown rates $\{\gamma_{\pm,1}, \gamma_{\pm,2}\}$. Results for the Heisenberg model are shown in Fig.~\ref{fig:two-qubit}, where the Transformer accurately predicts all four rates with $R^2 \geq 0.95$.

Each trajectory is generated from a product of Haar-random pure states and evolved under the Lindblad master equation. We simulate 3000 examples for each model. The
dataset is split into 80 $\%$ training and 20 $\%$ testing. Feature extraction techniques are applied to improve performance, and full implementation details are provided in Supplemental Material [\onlinecite{SM}, Sec.~C].

These two-qubit models are summarized in columns 5 and 6 of Table~\ref{table:ex}. The dataset offers a testbed for learning dissipative dynamics with tunable interactions and multi-qubit decoherence. Theoretical justification for identifiability of the dissipation rates in this setting is given in Supplemental Material [\onlinecite{SM}, Sec.~E].

\subsection{Photon–Spin Coupling Systems}

We now apply our framework to hybrid light–matter systems, focusing on the Jaynes–Cummings model~\cite{shore1993jaynes}. The system is governed by the Hamiltonian
\begin{equation}
H = \omega_c a^\dagger a + \frac{\omega_q}{2} \sigma_z + g (\sigma_+ a + \sigma_- a^\dagger),
\end{equation}
where $\omega_c$ and $\omega_q$ denote the cavity and qubit frequencies, respectively, and $g$ is the light–matter coupling strength.

To reflect realistic parameter regimes, we sample $(\omega_c, \omega_q)$ uniformly from $(0.8, 1.2)$ and $g$ from $(0.01, 0.1)$. The system is initialized in a product state $\ket{n} \otimes \ket{g}$, where $1 \leq n \leq 10$ is drawn randomly. The open-system dynamics are governed by a time-dependent Lindblad master equation:
\begin{equation}
\dot{\rho} = -i [H, \rho] + \kappa(t)\, \mathcal{D}[a]\rho + \gamma(t)\, \mathcal{D}[\sigma_-]\rho,
\end{equation}
where $\kappa(t)$ and $\gamma(t)$ represent photon and qubit decay rates, respectively. Both are parameterized as second-order Bernstein polynomials:
\[
\kappa(t) = \sum_{j=0}^{2} a_{j,\kappa} b_{j,2}(t), \quad \gamma(t) = \sum_{j=0}^{2} a_{j,\gamma} b_{j,2}(t),
\]
with coefficients $a_{j,\kappa}, a_{j,\gamma}$ sampled uniformly from $(0.01, 0.2)$.

The model input consists of time-series observables $\langle \sigma_s(t) \rangle s\in\{x,y,z\}$,  $\langle a^\dagger a(t) \rangle$, and $\langle X \otimes \sigma_z(t) \rangle$, recorded at 100 points over $t \in [0,10]$. The physical parameters $(\omega_c, \omega_q, g)$ are excluded from the input. The regression target is the six Bernstein coefficients $\{a_{j,\gamma}, a_{j,\kappa}\}_{j=0}^{2}$ encoding the dissipation rates.

We generate 5000 training examples and train the model in a supervised setting with dataset splited into 80 $\%$ training and 20 $\%$ testing. Results are shown in Fig.~\ref{fig:open}, where all predicted coefficients achieve $R^2$ scores above 0.9. Full model specifications and training procedures are described in Supplemental Material [\onlinecite{SM}, Sec.~D]. The identifiability of $\gamma(t)$ and $\kappa(t)$ from these observables is formally justified in Sec.~E. The setup for this case is summarized in column 7 of Table~\ref{table:ex}.

\begin{figure}[t]
    \centering
    \includegraphics[width=\columnwidth]{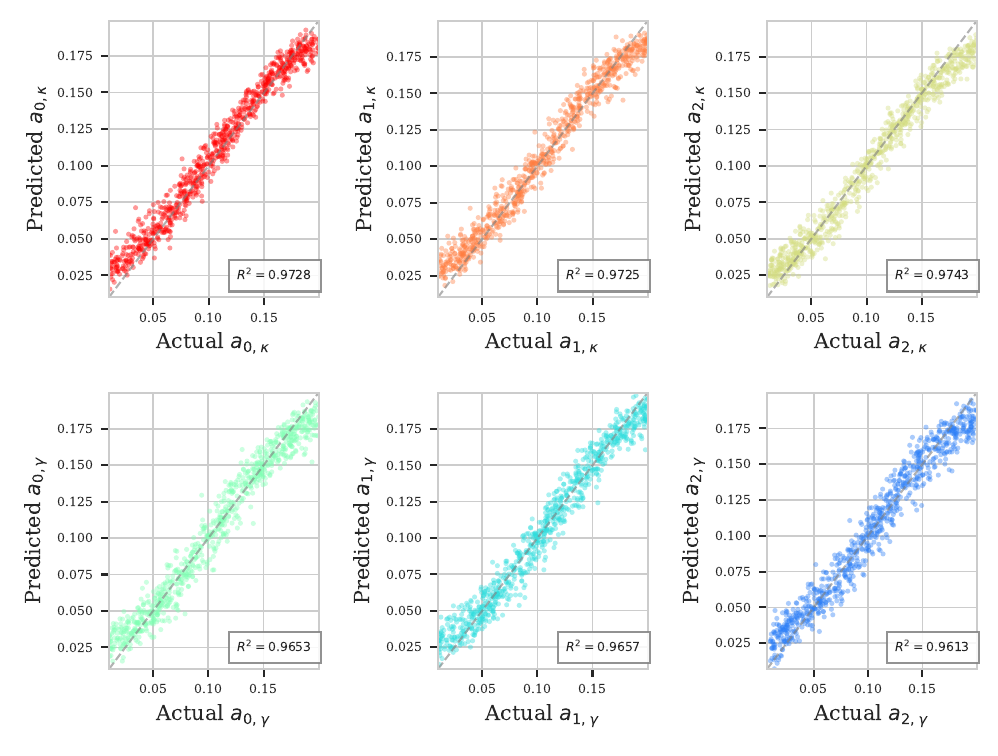}
    \caption{
    Prediction of time-dependent dissipation rates in the open Jaynes–Cummings model. The decay rates $\gamma(t)$ and $\kappa(t)$ are each parameterized by a second-order Bernstein polynomial with three unknown coefficients. The model accurately reconstructs all six coefficients from time series of observables $\langle \sigma_z(t) \rangle$, $\langle a^\dagger a(t) \rangle$, and $\langle X \otimes \sigma_z(t) \rangle$. See Table~\ref{table:ex}, column 7 for full input/output specification.
    }
    \label{fig:open}
\end{figure}

\section{Discussion} \label{s:discussion}

We have presented a Transformer-based framework for learning time-dependent dissipation rates in open quantum systems. By leveraging observable time series such as $\langle \sigma_{x}(t) \rangle$, $\langle \sigma_{y}(t) \rangle$, and $\langle \sigma_{z}(t) \rangle$, the model accurately reconstructs dissipation profiles across a range of quantum systems, from single- and two-qubit models to hybrid light–matter systems like the Jaynes–Cummings model. A key advantage of this approach is that it does not require knowledge of the system's initial quantum state, making it robust to initialization noise—a desirable property in experimental quantum platforms~\cite{georgescu2020divincenzo}.

While Bernstein polynomials were initially introduced to encode non-negative dissipation functions, their primary benefit lies in generating smooth and diverse training data rather than in improving predictive accuracy. This construction ensures consistency with physical constraints and facilitates efficient supervised learning.

Although our study focused on Lindblad dynamics under weak coupling, the proposed framework can extend to other open-system models. In particular, it is compatible with more general Markovian descriptions such as the Redfield equation~\cite{pollard1994solution}, which governs dynamics in systems with strong internal interactions. Under the secular approximation, the Redfield formalism reduces to the Lindblad form, making the learned mapping transferable. Beyond quantum systems, this methodology may also be applied to classical stochastic dynamics, including those described by Fokker–Planck equations.

In summary, our results demonstrate that modern sequence models—especially Transformers—offer a scalable, robust, and interpretable approach to learning dissipation in open quantum systems. This opens new avenues for data-driven identification of environment-induced effects across a broad range of quantum technologies.

\section{Acknowledgements}
EJK thanks National Yang Ming Chiao Tung University for its support. CSC thanks Neuro Industry, Inc. for its support.

\bibliography{references}

\clearpage
\onecolumngrid

\begin{center}
\textbf{\large Supplementary Material: Unraveling Quantum Environments: Transformer-Assisted Learning in Lindblad Dynamics}
\end{center}
\vspace{0.5cm}

\renewcommand{\thesection}{S\arabic{section}}  
\renewcommand{\theequation}{S\arabic{equation}}  
\setcounter{section}{0}
\setcounter{equation}{0}



\section*{Section A: Bernstein Polynomial Approximation}
\label{app:bernstein}

Let $ \{b_{\nu,n}(t)\}_{\nu=0}^n $ denote the Bernstein basis polynomials of degree $ n $, defined on $ [0,1]$ by
\begin{equation}
b_{\nu,n}(t) = \binom{n}{\nu} t^\nu (1 - t)^{n - \nu}, \quad \nu = 0, 1, \dots, n.
\end{equation}
These polynomials form a non-negative partition of unity and span the space$ \Pi_n $ of real polynomials of degree at most $ n $. Given a continuous function $ f: [0,1] \to \mathbb{R} $, the Bernstein approximation of $ f $ is
\begin{equation}
B_n[f](t) = \sum_{\nu=0}^n f\left( \frac{\nu}{n} \right) b_{\nu,n}(t).
\end{equation}
By the classical Weierstrass approximation theorem, $ B_n[f] \to f $ uniformly as $ n \to \infty $ for any continuous $ f $~\cite{stone1948generalized}.

Compared to Taylor expansions, Bernstein polynomials have two advantages: (i) they preserve non-negativity if all coefficients are non-negative, and (ii) they are globally defined over the entire interval. For example, the quadratic polynomial $ f(t) = t^2 - 2t + 2 $ is strictly positive on $ [0,1] $, but has negative Taylor coefficients. In contrast, any non-negative quadratic function on $ [0,1] $ can be expressed as a positive linear combination of the Bernstein basis $ \{(1-t)^2, t(1-t), t^2\} $.

This positivity makes the Bernstein basis particularly useful for generating random non-negative decay functions $ \gamma(t) $ in open quantum systems. In our work, we represent unknown time-dependent dissipation rates using degree-2 Bernstein polynomials:
\begin{equation}
\gamma(t) = \sum_{\nu=0}^2 a_\nu b_{\nu,2}(t), \quad a_\nu \geq 0.
\end{equation}
We show in Section BCD that these coefficients can be effectively learned from time-series observables via Transformer regression. The robustness of this parameterization is demonstrated across all examples.

\section*{Section B: Single-Qubit Model Details and Additional Example with Two Time-Dependent $\gamma_{\pm}(t)$}

\subsection{Training Architecture and Feature Design}

Our core model is based on the Transformer encoder architecture~\cite{vaswani2017attention}, which learns temporal dependencies via self-attention. Given an input matrix $X \in \mathbb{R}^{n \times d}$, where $n$ is the sequence length and $d$ is the feature dimension, the self-attention mechanism computes:
\begin{equation}
\mathrm{Attention}(Q, K, V) = \mathrm{softmax}\left( \frac{QK^\top}{\sqrt{d_k}} \right) V,
\end{equation}
where the query, key, and value matrices are defined as:
\[
Q = XW^Q, \quad K = XW^K, \quad V = XW^V,
\]
with $W^Q, W^K, W^V \in \mathbb{R}^{d \times d_k}$ being learned projection matrices. The dot product $QK^\top$ measures similarity between time steps, and the softmax weights modulate which values $V$ to attend to at each position.

For richer representation learning, we use multi-head attention:
\begin{align}
\mathrm{MultiHead}(X) &= \mathrm{Concat}(\mathrm{head}_1, \dots, \mathrm{head}_h) W^O, \\
\mathrm{head}_i &= \mathrm{Attention}(XW_i^Q, XW_i^K, XW_i^V),
\end{align}
where $W^O \in \mathbb{R}^{hd_k \times d}$ is an output projection, and $h$ is the number of attention heads (we use $h=8$). This allows the model to jointly learn from multiple subspaces of the input.

Each encoder layer consists of multi-head self-attention, followed by a feedforward block, layer normalization, and residual connections. The final representation is passed to a multilayer perceptron (MLP) to predict the target coefficients.

Our regression pipeline is based on a Transformer encoder model designed to predict dissipation profiles from Pauli observable trajectories. Each single-qubit dataset consists of three input time series $\sigma_x(t)$, $\sigma_y(t)$, and $\sigma_z(t)$, evaluated at 100 uniform time points over $[0,10]$.

We extract 18 handcrafted features per observable, including:
\begin{itemize}
    \item Basic statistics: mean, std, max, min, median, 25th/75th percentiles
    \item Temporal variation: range, IQR, slope, mean/std of $\Delta x$ and $\Delta^2 x$
    \item Distributional shape: skewness and kurtosis
    \item Spectral: max and average FFT magnitude (excluding DC component)
\end{itemize}
These yield a 54-dimensional feature vector per sample. All features and targets are standardized using $z$-score normalization. The model architecture includes:
\begin{itemize}
    \item Linear projection $\rightarrow$ LayerNorm $\rightarrow$ ReLU $\rightarrow$ Dropout
    \item 4-layer Transformer encoder with 8 attention heads, feedforward dimension 512
    \item MLP head with layer sizes 256–128–64 and linear output layer
\end{itemize}
We train using AdamW optimizer (lr=$10^{-3}$, weight decay=0.01), with MSE loss, ReduceLROnPlateau scheduler (patience 20), and early stopping (patience 30). Each model is trained up to 300 epochs with batch size 32. Final predictions are inverse-transformed, and evaluated using $R^2$ and MSE on the test set.

\subsection{Example-Specific Configurations (Examples 1–4 in the Main Text)}

\paragraph{Example 1: Constant $\gamma$} A minimal Transformer (2 encoder layers, 4 heads) predicts a single scalar $\gamma$ from the time-averaged $\sigma_{x,y,z}$. This baseline achieves $R^2 \geq 0.99$.

\paragraph{Example 2: Constant $\gamma_{\pm}$} Predicts two values $\gamma_{+}, \gamma_{-}$ using 30 input features (10 per Pauli observable). A 3-layer Transformer encoder with 8 heads is used. The model learns both rates jointly with high accuracy.

\paragraph{Examples 3–4: Time-Dependent $\gamma(t)$} We represent each dissipation rate using degree-2 Bernstein polynomial:
\[
\gamma(t) = \sum_{j=0}^2 a_j b_{j,2}(t), \quad b_{j,2}(t) = \binom{2}{j} t^j (1-t)^{2-j}.
\]
The model predicts the coefficient vector $(a_0, a_1, a_2)$ per channel. In Example 3, only one time-dependent $\gamma(t)$ is present; in Example 4, both $\gamma_{+}(t)$ and $\gamma_{-}(t)$ are time-dependent, requiring six total coefficients. Both models use the same 4-layer Transformer encoder as described above.

\subsection{Additional Results for Example 4 (Two Time-Dependent Dissipation Channels)}

We provide a detailed prediction plot for Example 4 (see Table in Main Text) in Fig.~\ref{fig:app:current-decay}. This task involves learning six Bernstein coefficients from $\sigma_{x,y,z}$ time series. The Transformer accurately recovers all coefficients, demonstrating strong generalization in the time-dependent multi-channel setting.

\begin{figure*}[t]
\centering
\includegraphics[width=\textwidth]{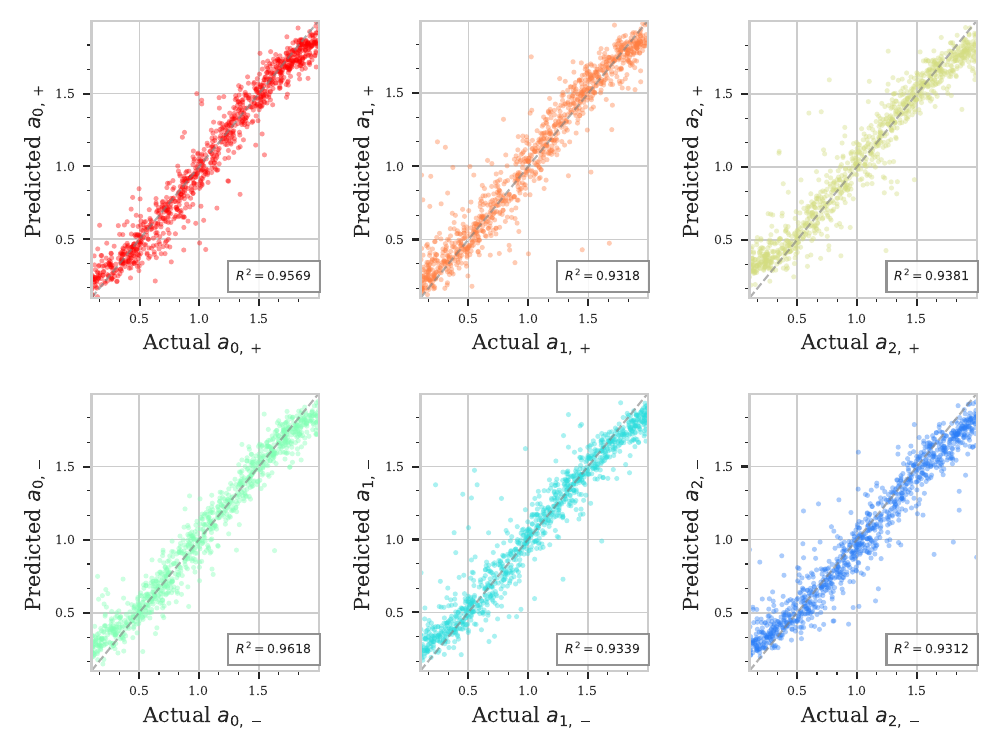}
\caption{
Prediction results for Example 4 (Main Text) involving a single-qubit system with two quadratic time-dependent loss channels. The dissipation rates are represented by second-order Bernstein polynomials: $\gamma_{\pm}(t) = \sum_{j=0}^{2} a_{j,\pm} b_{j,2}(t)$, and six coefficients are predicted from local observables.
}
\label{fig:app:current-decay}
\end{figure*}

\section*{Section C: Two-Qubit Models with Constant Dissipation Rates} \label{app:twoqubit}

This section provides the training details for both two-qubit models: the Heisenberg model presented in the main text, and the transverse-field Ising model included here as additional results (see Fig.~\ref{fig:Iontrap}). While their Hamiltonians differ, the dissipation structure and model training pipeline remain identical.

\paragraph{Heisenberg model (Main Text).} The Hamiltonian is given by
\begin{equation}
H = J(\sigma_x^{(1)} \sigma_x^{(2)} + \sigma_y^{(1)} \sigma_y^{(2)} + \sigma_z^{(1)} \sigma_z^{(2)}),
\end{equation}
where $J \sim \mathrm{Uniform}(0, 2)$ is the exchange coupling strength. Each qubit is coupled to a local environment with constant dissipation and gain rates $\gamma_{\pm,j}$ for $j=1,2$.

\paragraph{Transverse-field Ising model (Supplemental Result).} The Hamiltonian reads
\begin{equation}
H = J \sigma_z^{(1)} \sigma_z^{(2)} + h (\sigma_x^{(1)} + \sigma_x^{(2)}),
\end{equation}
with $(J,h) \sim \mathrm{Uniform}(0.1, 2)^2$. The jump operators and dissipation structure are the same as in the Heisenberg case.

\subsection*{Training Method (Shared by Both Models)}

For both models, we simulate 3000 trajectories. Each initial state is prepared as a tensor product of two Haar-random pure states. The systems evolve under the Lindblad equation with fixed dissipation rates $\gamma_{\pm,j}$. Time series of $\langle \sigma_{x,y,z}^{(j)}(t) \rangle$ for $j = 1,2$ are recorded at 100 time steps over $t \in [0, 1]$.

From each observable component, 10 statistical features are extracted—mean, standard deviation, extrema, percentiles, slope, and interquartile range—resulting in 60-dimensional input vectors. These vectors are standardized and projected into a 128-dimensional embedding space. We use a 4-layer Transformer encoder (8 heads, feedforward size 512), followed by an MLP decoder that predicts the four dissipation rates.

Training employs the AdamW optimizer ($\text{lr} = 10^{-3}$, weight decay 0.01), with ReduceLROnPlateau scheduling and early stopping (patience 30). Batch size is 32, and the model is trained for up to 300 epochs. The final models achieve $R^2 \geq 0.95$ on all outputs for both systems.

\begin{figure*}[t]
\centering
\includegraphics[width=\textwidth]{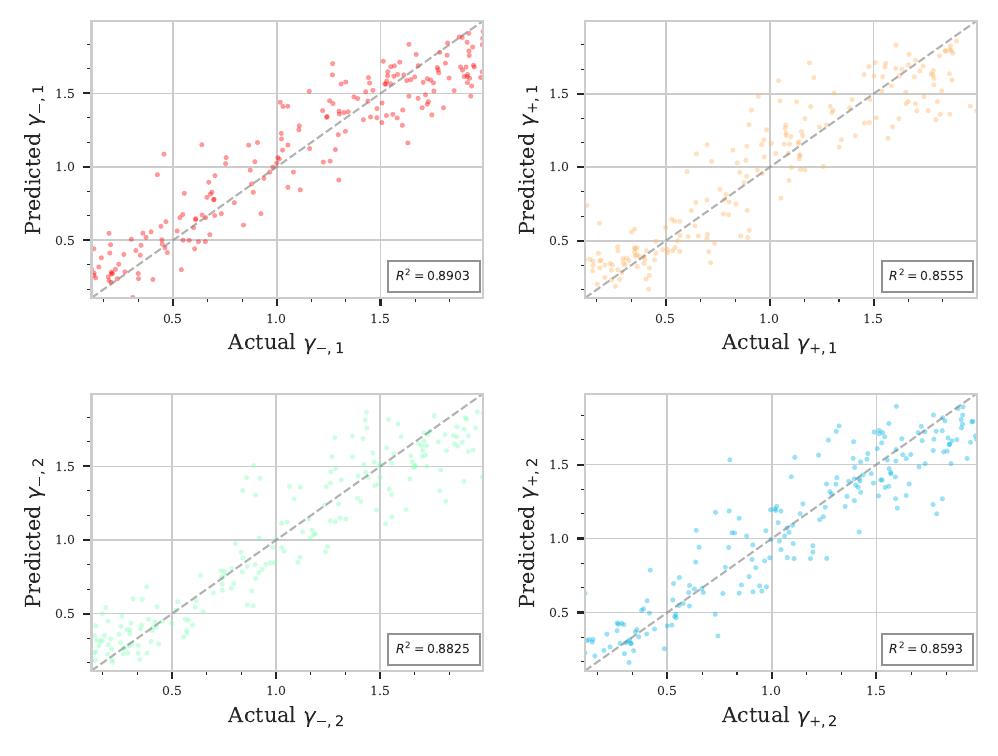}
\caption{
Prediction results for the transverse-field Ising model with four constant dissipation rates $\gamma_{\pm,j}$ ($j = 1,2$). Each dissipation rate is inferred from local observables $\langle \sigma_{x,y,z}^{(j)}(t) \rangle$ using a Transformer-based regression model. This confirms the method's generalizability beyond the Heisenberg model shown in the main text.
}
\label{fig:Iontrap}
\end{figure*}

\section*{Section D: Training Details for the Photon–Spin Coupling System} \label{app:photonspin}

This section describes the training procedure for the Transformer model used in the Jaynes–Cummings light–matter interaction setting. The goal is to predict six coefficients 
$(a_{0,\kappa}, a_{1,\kappa}, a_{2,\kappa}, a_{0,\gamma}, a_{1,\gamma}, a_{2,\gamma})$ that parameterize the time-dependent decay functions $\kappa(t)$ and $\gamma(t)$ as quadratic Bernstein polynomials.

Each data sample contains five observables recorded over 100 time steps: 
$\langle \sigma_x(t) \rangle$, $\langle \sigma_y(t) \rangle$, $\langle \sigma_z(t) \rangle$, $\langle a^\dagger a(t) \rangle$, and 
$\langle X_{\mathrm{photon}} \otimes \sigma_z(t) \rangle$.
From each signal, 18 time-series features are extracted—including statistical descriptors (mean, std, min, max, median, percentiles), shape (skewness, kurtosis), variation (range, IQR), temporal differences (mean/std of $\Delta x$, $\Delta^2 x$), slope, and spectral features (mean/max non-DC DFT amplitude). This results in a 90-dimensional input vector per sample.

All inputs and targets are standardized using $z$-score normalization. The 90-dimensional feature vector is projected into a 128-dimensional latent space via a feedforward block (Linear $\rightarrow$ LayerNorm $\rightarrow$ ReLU $\rightarrow$ Dropout). This is followed by a 4-layer Transformer encoder (8 heads per layer, feedforward dimension 512). The encoded representation is passed through an MLP with layers of size 256, 128, and 64, followed by a final output layer producing six regression values.

Training uses the AdamW optimizer (initial learning rate $10^{-3}$, weight decay 0.01) and mean squared error (MSE) loss. A ReduceLROnPlateau scheduler lowers the learning rate by a factor of 0.5 when validation loss stalls for 20 epochs. Early stopping is triggered after 30 epochs without improvement. We train with a batch size of 64 for up to 300 epochs. Gradient clipping (max norm 1.0) is applied for numerical stability.

Model performance is evaluated using the $R^2$ score and MSE on both training and held-out test sets. Predictions are inverse-transformed to physical units, and scatter plots are generated for each output coefficient. All models, logs, and output results are stored in timestamped directories for full reproducibility.

\section*{Section E: Learning Identifiability in Lindbladian Dynamics} \label{app:le}

We begin with a basic identifiability result showing that the time-dependent decay rate $\gamma(t)$ can be uniquely recovered from local observables in a single-qubit open quantum system governed by the Lindblad equation.

\begin{theorem}
Consider a single qubit in Hilbert space $\mathbb{C}^2$ with a time-independent Hamiltonian $H = \sigma_z$ and a single jump operator $L = \sigma_-$. Then the dissipation rate $\gamma(t)$ is uniquely determined by the observable $\langle \sigma_z(t) \rangle$ alone, without knowledge of the initial state.
\end{theorem}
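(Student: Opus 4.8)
The plan is to collapse the full matrix-valued Lindblad equation into a single scalar ordinary differential equation for $z(t) := \langle \sigma_z(t)\rangle$ and then invert it. First I would write $\dot z = \tr(\sigma_z\,\dot\rho)$ and substitute the generator. The unitary part contributes $-i\,\tr(\sigma_z[\sigma_z,\rho])$, which vanishes identically by cyclicity of the trace together with $[\sigma_z,\sigma_z]=0$; hence $H=\sigma_z$ plays no role whatsoever in the $\langle\sigma_z\rangle$ channel and can be dropped. This is precisely the structural fact that makes a single observable sufficient: $\sigma_z$ is conserved by the Hamiltonian and is acted on only by the dissipator.

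Second, I would evaluate the dissipator term $\gamma(t)\,\tr\!\big(\sigma_z(\sigma_-\rho\sigma_+ - \tfrac12\{\sigma_+\sigma_-,\rho\})\big)$. Using $\sigma_+\sigma_- = |0\rangle\langle0| = \tfrac12(I+\sigma_z)$ and cyclicity, each piece reduces to the population $\rho_{00} = \tfrac12(1+z)$, and the whole expression collapses to the closed linear equation
\begin{equation}
\dot z(t) = -\gamma(t)\,\big(1 + z(t)\big).
\end{equation}
The essential observation is that this ODE \emph{closes on $z$ alone}: the coherences $\langle\sigma_x\rangle,\langle\sigma_y\rangle$ never enter, so the trajectory of $\langle\sigma_z\rangle$ already carries all the information about $\gamma$, and no other observable or initial-state data is needed.

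Third, I would invert. Solving the linear equation gives $1+z(t) = (1+z(0))\exp(-\int_0^t\gamma(s)\,ds)$, so whenever the initial state is not the invariant state $|1\rangle$ (equivalently $z(0)\neq -1$), the factor $1+z(t)$ stays strictly positive for all $t$ and
\begin{equation}
\gamma(t) = -\frac{\dot z(t)}{1 + z(t)}
\end{equation}
is well-defined, reconstructing $\gamma$ pointwise and uniquely directly from the recorded trajectory $z(t)$, with no reference to $z(0)$ or to $H$. Differentiability of $z(t)$ is automatic from smoothness of the dynamics for integrable $\gamma$.

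I expect the only genuine subtlety — rather than any hard calculation — to be the degenerate fixed point: if the qubit is prepared exactly in the invariant state $z\equiv -1$, then $\dot z\equiv 0$ for every $\gamma$, so the trajectory is stationary and $\gamma$ is unrecoverable. This is an unavoidable information-theoretic obstruction, so the phrase ``without knowledge of the initial state'' should be read as holding for any non-stationary trajectory, i.e. for every initial state off the single excluded pole. I would flag this caveat explicitly and note that the excluded set is measure-zero in the random-state ensemble used for the numerics, and that the deterministic choice $|0\rangle$ (with $z(0)=+1\neq -1$) used in the single-qubit experiments lies safely away from it.
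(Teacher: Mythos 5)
Your proof is correct and takes essentially the same route as the paper: both reduce the dynamics to the closed scalar ODE $\dot z = -\gamma(t)\,(1+z(t))$ for $z=\langle\sigma_z\rangle$ (the paper via explicit matrix elements $p_0,p_1,\rho_{01}$ and solving $\dot p_0=-\gamma p_0$, you via a Heisenberg-picture trace computation) and then invert pointwise to obtain $\gamma(t) = -\dot z(t)/\bigl(1+z(t)\bigr)$. Your explicit treatment of the degenerate initial state $z(0)=-1$, where the trajectory is stationary and $\gamma$ is information-theoretically unrecoverable, is a caveat the paper's proof silently omits — its final formula divides by $1+\langle\sigma_z(t)\rangle$ while claiming validity ``regardless of the initial state'' — so your version is in fact the more careful one.
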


\begin{proof}
The Lindblad equation is given by:
\[
\dot{\rho} = -i[\sigma_z, \rho] + \gamma(t)\left( \sigma_- \rho \sigma_+ - \frac{1}{2} \{\sigma_+ \sigma_-, \rho\} \right).
\]
We parametrize the density matrix as:
\[
\rho(t) = \begin{bmatrix} p_0(t) & \rho_{01}(t) \\ \rho_{01}^*(t) & p_1(t) \end{bmatrix},
\quad p_0(t) + p_1(t) = 1.
\]

Compute each term in the equation:

- The unitary part:
\[
-i[\sigma_z, \rho] = 
\begin{bmatrix}
0 & -2i\rho_{01} \\
2i\rho_{01}^* & 0
\end{bmatrix}.
\]

- The dissipative part:
\[
\sigma_- \rho \sigma_+ = 
\begin{bmatrix}
0 & 0 \\
p_0 & 0
\end{bmatrix},
\quad
\frac{1}{2} \{\sigma_+ \sigma_-, \rho\} = 
\begin{bmatrix}
p_0 & \rho_{01} \\
\rho_{01}^* & p_1
\end{bmatrix},
\]
so the dissipator becomes:
\[
\mathcal{D}[\sigma_-](\rho) = 
\begin{bmatrix}
-p_0 & -\rho_{01} \\
p_0 - \rho_{01}^* & -p_1
\end{bmatrix}.
\]

Combining both terms, we obtain the differential equations:
\begin{align}
\dot{p}_0 &= -\gamma(t) p_0, \\
\dot{p}_1 &= \gamma(t) p_0, \\
\dot{\rho}_{01} &= -2i\rho_{01} - \gamma(t) \rho_{01}.
\end{align}

Now, $\langle \sigma_z \rangle = p_0 - p_1 = 2p_0 - 1$, hence:
\[
p_0(t) = \frac{1 + \langle \sigma_z(t) \rangle}{2}.
\]

Solving $\dot{p}_0 = -\gamma(t)p_0$ gives:
\[
p_0(t) = p_0(0) e^{ - \int_0^t \gamma(s)\,ds }.
\]
Therefore,
\[
\langle \sigma_z(t) \rangle = -1 + 2p_0(0) e^{ - \int_0^t \gamma(s)\,ds }.
\]

Differentiating:
\[
\frac{d}{dt} \langle \sigma_z(t) \rangle = -2p_0(0) \gamma(t) e^{ - \int_0^t \gamma(s)\,ds }.
\]

Rewriting:
\[
\boxed{ \gamma(t) = - \frac{1}{1 + \langle \sigma_z(t) \rangle} \frac{d}{dt} \langle \sigma_z(t) \rangle. }
\]

This expression shows that $\gamma(t)$ can be reconstructed from $\langle \sigma_z(t) \rangle$ alone, regardless of the initial state. This validates the ability of our model to infer $\gamma(t)$ directly from observable trajectories.
\end{proof}
\begin{theorem}
Consider a single qubit in $\mathbb{C}^2$ with time-independent Hamiltonian $H = \sigma_z$, and jump operators $L_1 = \sigma_{-}$ and $L_2 = \sigma_{+}$ associated with time-dependent dissipation rates $\gamma_{-}(t)$ and $\gamma_{+}(t)$, respectively. Then, the functions $\gamma_{+}(t)$ and $\gamma_{-}(t)$ can be uniquely determined from the observables $\langle \sigma_x(t) \rangle$, $\langle \sigma_y(t) \rangle$, and $\langle \sigma_z(t) \rangle$.
\end{theorem}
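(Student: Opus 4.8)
The plan is to reduce the Lindblad equation to a closed linear system for the Bloch coordinates $x(t)=\langle\sigma_x(t)\rangle$, $y(t)=\langle\sigma_y(t)\rangle$, $z(t)=\langle\sigma_z(t)\rangle$, and then to invert algebraically for the two rate functions. Working in the Heisenberg (adjoint) picture, I would evaluate $\tfrac{d}{dt}\langle\sigma_k\rangle=\langle i[H,\sigma_k]+\gamma_-(t)\mathcal{D}^\dagger[\sigma_-]\sigma_k+\gamma_+(t)\mathcal{D}^\dagger[\sigma_+]\sigma_k\rangle$ for $k\in\{x,y,z\}$, where $\mathcal{D}^\dagger[L]O=L^\dagger O L-\tfrac12\{L^\dagger L,O\}$. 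The Hamiltonian $H=\sigma_z$ contributes only a precession of the transverse components, while both dissipators damp $\sigma_x,\sigma_y$ isotropically and drive $\sigma_z$ toward a rate-dependent steady value. The expected outcome is the system
\begin{align}
\dot x &= -2y-\tfrac12\,\Gamma(t)\,x,\\
\dot y &= 2x-\tfrac12\,\Gamma(t)\,y,\\
\dot z &= -\Gamma(t)\,z+\Delta(t),
\end{align}
where I define the sum and difference $\Gamma(t)=\gamma_+(t)+\gamma_-(t)$ and $\Delta(t)=\gamma_+(t)-\gamma_-(t)$. The key structural feature is that the two channels enter the transverse sector only through $\Gamma$, and the longitudinal sector through both $\Gamma$ and $\Delta$.

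The crucial step is to isolate $\Gamma$ from the transverse data, decoupling it from the precession. Forming the transverse magnitude and using the first two equations, the rotation terms cancel and I obtain
\begin{equation}
\frac{d}{dt}\bigl(x^2+y^2\bigr)=-\Gamma(t)\,\bigl(x^2+y^2\bigr),
\end{equation}
so that, wherever $x^2+y^2\neq 0$,
\begin{equation}
\Gamma(t)=-\frac{\tfrac{d}{dt}\bigl(x^2(t)+y^2(t)\bigr)}{x^2(t)+y^2(t)}=-\frac{d}{dt}\ln\!\bigl(x^2(t)+y^2(t)\bigr).
\end{equation}
This recovers $\gamma_+(t)+\gamma_-(t)$ from the observed trajectories alone, independent of the Hamiltonian precession rate and of the initial state.

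With $\Gamma(t)$ in hand, the longitudinal equation is immediately invertible: $\Delta(t)=\dot z(t)+\Gamma(t)\,z(t)$, and therefore
\begin{equation}
\gamma_\pm(t)=\frac12\Bigl[\Gamma(t)\pm\bigl(\dot z(t)+\Gamma(t)\,z(t)\bigr)\Bigr],
\end{equation}
which are expressed entirely through $x,y,z$ and their first derivatives. This shows both rates are uniquely determined and simultaneously explains why all three observables are needed here, in contrast to Theorem~1: a single longitudinal equation carries two unknown combinations $\Gamma$ and $\Delta$, so the transverse pair $(x,y)$ is exactly what breaks the degeneracy and fixes $\Gamma$.

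The main obstacle I anticipate is not the algebra but the nondegeneracy condition $x^2+y^2\neq 0$. If the initial state lies on the $z$-axis of the Bloch sphere, the transverse components vanish identically (the transverse equations are linear and homogeneous), and $\Gamma$ cannot be extracted, so only $\Delta$ — i.e.\ a single combination — is recoverable. I would therefore state the result under the mild assumption that the initial transverse component is nonzero; since $x^2+y^2$ evolves by the multiplicative factor $\exp(-\int_0^t\Gamma)$, positivity at $t=0$ guarantees positivity for all $t$, and a Haar-random initial state satisfies this almost surely. The remaining care is bookkeeping: fixing the sign conventions for $\sigma_\pm$ and verifying that the dissipators produce no $\sigma_x$–$\sigma_y$ cross-coupling (so the rotation terms genuinely cancel in the magnitude equation) and no inhomogeneous drive in the transverse sector.
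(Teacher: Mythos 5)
Your proposal is correct, and it rests on the same dynamical equations as the paper's proof: the paper works with the density-matrix entries and obtains $\dot{p}_0 = -\gamma_-(t)p_0 + \gamma_+(t)p_1$ and $\dot{\rho}_{01} = -2i\rho_{01} - \tfrac12\bigl(\gamma_-(t)+\gamma_+(t)\bigr)\rho_{01}$, which are exactly your Bloch equations after the substitution $x+ iy \leftrightarrow 2\rho_{01}^{*}$, $z = 2p_0-1$. Where you genuinely diverge is in the inversion step. The paper forms the complex observable $\rho_{01} = \tfrac12\bigl(\langle\sigma_x\rangle - i\langle\sigma_y\rangle\bigr)$, differentiates, and ``solves algebraically,'' quoting a single explicit but unwieldy closed-form expression for $\gamma_-(t)$ whose numerator and denominator mix real and imaginary terms. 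You instead pass to the rotation-invariant quantity $x^2+y^2 = 4|\rho_{01}|^2$, whose logarithmic derivative isolates $\Gamma = \gamma_+ + \gamma_-$ with the precession terms cancelling identically, and then read off $\Delta = \gamma_+ - \gamma_-$ from the longitudinal equation. This buys three things: the formulas are manifestly real and compact; the independence from the Hamiltonian precession rate is explicit rather than buried in algebra; and, most importantly, you surface the nondegeneracy condition $x^2+y^2 \neq 0$ that the theorem actually requires. The paper never states this condition, yet its own formula for $\gamma_-(t)$ has denominator $2i\langle\sigma_x\rangle\langle\sigma_z\rangle + 4i\langle\sigma_x\rangle + 2\langle\sigma_y\rangle\langle\sigma_z\rangle + 4\langle\sigma_y\rangle$, which vanishes precisely when $\langle\sigma_x\rangle = \langle\sigma_y\rangle = 0$, i.e.\ exactly in the degenerate case you identify (initial state on the $z$-axis, where the transverse sector is identically zero and only the combination $\Delta - \Gamma z$ is observable). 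Your observation that Haar-random initial states avoid this degeneracy almost surely, and that positivity of $x^2+y^2$ propagates forward in time via the factor $\exp\bigl(-\int_0^t \Gamma\bigr)$, makes your version of the statement both sharper and more honest than the one in the paper.
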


\begin{proof}
Following the single-jump case, we now consider a more general dissipative setting with two jump channels. The master equation becomes:
\begin{align}
\dot{\rho} &= -i[\sigma_z, \rho] 
+ \gamma_{-}(t) \left( \sigma_- \rho \sigma_+ - \frac{1}{2} \{ \sigma_+ \sigma_-, \rho \} \right) \nonumber \\
&\quad + \gamma_{+}(t) \left( \sigma_+ \rho \sigma_- - \frac{1}{2} \{ \sigma_- \sigma_+, \rho \} \right).
\end{align}

Let $\rho(t)$ be written in the standard basis:
\[
\rho(t) = \begin{bmatrix} p_0(t) & \rho_{01}(t) \\ \rho_{01}^{*}(t) & p_1(t) \end{bmatrix}, \quad p_0(t) + p_1(t) = 1.
\]

Computing each Lindblad term yields:
\begin{align}
\sigma_- \rho \sigma_+ &= \begin{bmatrix} 0 & 0 \\ p_0 & 0 \end{bmatrix}, &
\sigma_+ \rho \sigma_- &= \begin{bmatrix} 0 & p_1 \\ 0 & 0 \end{bmatrix}, \\
\{\sigma_+ \sigma_-, \rho\}/2 &= \begin{bmatrix} p_0 & \rho_{01} \\ \rho_{01}^{*} & p_1 \end{bmatrix}, &
\{\sigma_- \sigma_+, \rho\}/2 &= \begin{bmatrix} p_0 & \rho_{01} \\ \rho_{01}^{*} & p_1 \end{bmatrix}.
\end{align}

Therefore, the two dissipative contributions are:
\begin{align}
\gamma_{-}(t) \left( \cdots \right) &= \gamma_{-}(t) \begin{bmatrix} -p_0 & -\rho_{01} \\ p_0 - \rho_{01}^* & -p_1 \end{bmatrix}, \\
\gamma_{+}(t) \left( \cdots \right) &= \gamma_{+}(t) \begin{bmatrix} -p_0 & p_1 - \rho_{01} \\ -\rho_{01}^* & -p_1 \end{bmatrix}.
\end{align}

Combining all terms, the equations of motion for each element of $\rho$ are:
\begin{align}
\dot{p}_0 &= -\gamma_{-}(t)\, p_0 + \gamma_{+}(t)\, p_1, \\
\dot{p}_1 &= -\dot{p}_0, \\
\dot{\rho}_{01} &= -2i \rho_{01} - \frac{1}{2}(\gamma_{-}(t) + \gamma_{+}(t)) \rho_{01}.
\end{align}

The Pauli expectation values are:
\begin{align}
\langle \sigma_x \rangle &= 2 \operatorname{Re}(\rho_{01}), &
\langle \sigma_y \rangle &= -2 \operatorname{Im}(\rho_{01}), \\
\langle \sigma_z \rangle &= p_0 - p_1 = 2p_0 - 1.
\end{align}
Define the complex observable:
\[
\frac{\langle \sigma_x \rangle - i \langle \sigma_y \rangle}{2} = \rho_{01}.
\]
Taking time derivatives, we obtain two equations involving $\gamma_{\pm}(t)$:
\begin{align}
\frac{1}{2} \frac{d}{dt} \langle \sigma_z(t) \rangle &= \gamma_{+}(t) - \gamma_{-}(t)\cdot \frac{1 + \langle \sigma_z(t) \rangle}{2}, \\
\frac{d}{dt} \left( \frac{\langle \sigma_x \rangle - i \langle \sigma_y \rangle}{2} \right)
&= \left(-2i - \frac{1}{2}(\gamma_{+}(t) + \gamma_{-}(t)) \right) \cdot \rho_{01}.
\end{align}

Solving these equations algebraically yields explicit expressions for $\gamma_{+}(t)$ and $\gamma_{-}(t)$ in terms of the observables and their time derivatives.
We denote time derivatives as $\dot{\langle O \rangle} := \frac{d}{dt} \langle O(t) \rangle$ for any observable $O$. One can solve for $\gamma_{-}(t)$ as:
\begin{equation}
\gamma_{-}(t) = \frac{
4 \langle \sigma_x \rangle \langle \sigma_z \rangle 
- i \langle \sigma_x \rangle \dot{\langle \sigma_z \rangle}
- 4i \langle \sigma_y \rangle \langle \sigma_z \rangle
- \langle \sigma_y \rangle \dot{\langle \sigma_z \rangle}
- 2i \langle \sigma_z \rangle \dot{\langle \sigma_x \rangle}
- 2 \langle \sigma_z \rangle \dot{\langle \sigma_y \rangle}
- 6i \dot{\langle \sigma_x \rangle}
- 6 \dot{\langle \sigma_y \rangle}
}{
2i \langle \sigma_x \rangle \langle \sigma_z \rangle
+ 4i \langle \sigma_x \rangle
+ 2 \langle \sigma_y \rangle \langle \sigma_z \rangle
+ 4 \langle \sigma_y \rangle
}.
\end{equation}

Once $\gamma_{-}(t)$ is obtained, $\gamma_{+}(t)$ can be recovered by substitution into the differential equations above. These expressions are exact and can be validated numerically against simulation data generated using QuTiP.

\end{proof}

We now generalize our theoretical analysis to multi-qubit open quantum systems. While exact analytical formulas such as those in the single-qubit case become intractable, we can still express the system evolution in a linear operator basis and outline the principle of identifiability. This justifies the feasibility of using machine learning to approximate the inversion from observable trajectories to underlying dissipation rates.

\subsection*{Pauli Basis Representation}

Let $\rho$ be the $n$-qubit density matrix, which can be expanded in the Pauli basis as
\[
\rho(t) = \sum_{\mu \in G_n} a_\mu(t) \mu,
\]
where $G_n$ denotes the $n$-qubit Pauli group, and $\{a_\mu(t)\}$ are real coefficients due to Hermiticity of $\rho$. The set $G_n$ spans a basis of dimension $4^n$, and the orthogonality relation $\mathrm{Tr}(\mu \nu) = \delta_{\mu,\nu}$ holds for $\mu,\nu \in G_n$.

Assume the Hamiltonian and Lindblad operators admit the following Pauli expansions:
\[
H = \sum_{\mu \in G_n} h_\mu \mu, \quad L_j = \sum_{\mu \in G_n} L_j^\mu \mu.
\]

We define the structure constants of the Pauli algebra via:
\[
[\mu, \nu] = \sum_{\lambda \in G_n} C^{\lambda}_{\mu,\nu} \lambda.
\]

\subsection*{Time Evolution of Coefficients}

Substituting the Pauli expansions into the Lindblad master equation, the derivative of the density matrix is:
\begin{align}
\sum_{\xi} \dot{a}_\xi \, \xi = & -i \sum_{\mu,\nu,\xi} h_\mu a_\nu C^{\xi}_{\mu,\nu} \, \xi + \sum_j \gamma_j \sum_{\mu,\nu,\lambda,\omega,\xi} \Bigg[
L_j^\mu a_\nu (L_j^\lambda)^* 
- \frac{1}{2}\Big( (L_j^\mu)^* a_\lambda L_j^\nu + (L_j^\nu)^* a_\mu L_j^\lambda \Big)
\Bigg] C^{\omega}_{\mu,\nu} C^{\xi}_{\omega,\lambda} \, \xi.
\end{align}
Here $\dot{a}_\xi$ represents the time derivative $\frac{da_\xi(t)}{dt}$.
By orthogonality, this yields a linear system for the coefficients:
\begin{align}
\dot{a}_\xi = & -i \sum_{\mu,\nu} h_\mu C^{\xi}_{\mu,\nu} a_\nu + \sum_j \gamma_j \sum_{\mu,\nu,\lambda,\omega}
\Bigg[
L_j^\mu a_\nu (L_j^\lambda)^* 
- \frac{1}{2} \Big( (L_j^\mu)^* a_\lambda L_j^\nu + (L_j^\nu)^* a_\mu L_j^\lambda \Big)
\Bigg] C^{\omega}_{\mu,\nu} C^{\xi}_{\omega,\lambda}.
\end{align}

This equation expresses $\dot{a}_\xi$ as a linear combination of $\gamma_j$ with coefficients that depend polynomially on $\{a_\mu(t)\}_{\mu \in G_n}$. Therefore, if the set $\{L_j\}$ is linearly independent as operators on $\mathbb{C}^{2^n}$, then for any $\xi$ the corresponding differential equations can be inverted to recover $\gamma_j(t)$ in principle.

\subsection*{Discussion and Practical Implications}

Although this linear system is exact, the inversion is prohibitively complex even for moderate $n$, involving $O(4^n)$ terms and tensor contractions over multiple indices. Moreover, estimating $\{a_\mu(t)\}$ from experiment requires full quantum state tomography. The sample complexity for estimating a rank-$r$ $n$-qubit state to Frobenius error $\epsilon$ is $O(2^n/\epsilon)$ \cite{o2016efficient}. Several efficient numerical methods exist, including Riemannian gradient descent over the quantum state manifold \cite{hsu2024quantum}.

This analysis supports our learning framework: instead of symbolically inverting this high-dimensional linear system, we use time-series observables as proxy input and train a Transformer to approximate the underlying mapping to $\gamma_j(t)$. The supervised data is generated by simulating the Lindblad evolution using tools such as QuTiP \cite{johansson2012qutip}.

In principle, a universal estimator can learn the function class $\{\gamma_j(t)\}$ through such indirect supervision, even when exact formulas are intractable.

\subsection{Photonic system learning ability}
\begin{theorem}
\label{thm:jc-identifiability}
Consider the open Jaynes--Cummings model with master equation
\begin{equation}
\dot{\rho}(t) = -i [H, \rho(t)] + \kappa(t)\, \mathcal{D}[a] \rho(t) + \gamma(t)\, \mathcal{D}[\sigma_-] \rho(t),
\end{equation}
where $H = \omega_c a^\dagger a + \frac{\omega_q}{2} \sigma_z + g (a^\dagger \sigma_- + a \sigma_+)$ is the Jaynes--Cummings Hamiltonian, and $\mathcal{D}[L]\rho = L \rho L^\dagger - \frac{1}{2} \{ L^\dagger L, \rho \}$ is the Lindblad dissipator. Then, given a fixed initial state $\rho(0)$, the dissipation rates $\gamma(t)$ and $\kappa(t)$ are uniquely determined over $[0,T]$ by the time series of five observables:
\[
\left\{ \langle \sigma_x(t) \rangle, \langle \sigma_y(t) \rangle, \langle \sigma_z(t) \rangle, \langle a^\dagger a(t) \rangle, \langle X \otimes \sigma_z(t) \rangle \right\}.
\]
\end{theorem}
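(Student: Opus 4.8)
The plan is to mirror the single-qubit proofs by working in the Heisenberg (adjoint) picture, where for any observable $O$ the master equation gives
\[
\frac{d}{dt}\langle O\rangle = i\langle [H,O]\rangle + \kappa(t)\,\langle \mathcal{D}^\dagger[a]\,O\rangle + \gamma(t)\,\langle \mathcal{D}^\dagger[\sigma_-]\,O\rangle,
\]
with the adjoint dissipator $\mathcal{D}^\dagger[L]O = L^\dagger O L - \tfrac12\{L^\dagger L,O\}$. Each such equation is affine in the unknown rates, so the whole task reduces to locating five observables whose Hamiltonian drift and dissipation coefficients are themselves expressible through the measured time series, and then inverting the resulting algebraic relations pointwise in $t$. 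Writing $X=a+a^\dagger$ for the photon quadrature entering the fifth observable, I expect exactly the listed five to suffice.

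First I would single out $\langle\sigma_y(t)\rangle$. Since $\mathcal{D}^\dagger[a]\sigma_y=0$, $\mathcal{D}^\dagger[\sigma_-]\sigma_y=-\tfrac12\sigma_y$, and $i[H,\sigma_y]=\omega_q\sigma_x-g\,X\otimes\sigma_z$, its equation of motion is
\[
\frac{d}{dt}\langle\sigma_y\rangle = \omega_q\langle\sigma_x\rangle - g\langle X\otimes\sigma_z\rangle - \tfrac12\gamma(t)\langle\sigma_y\rangle,
\]
whose right-hand side involves only measured quantities apart from $\gamma$; solving gives $\gamma(t)$ outright. Next, the population equation $\tfrac{d}{dt}\langle\sigma_z\rangle = -2g\langle A\rangle - \gamma(t)(1+\langle\sigma_z\rangle)$, with the energy-exchange observable $A:=i(a\sigma_+-a^\dagger\sigma_-)$ and $\mathcal{D}^\dagger[\sigma_-]\sigma_z=-(1+\sigma_z)$, $\mathcal{D}^\dagger[a]\sigma_z=0$, lets me solve for $\langle A(t)\rangle$ now that $\gamma$ is fixed. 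Finally the photon-number equation $\tfrac{d}{dt}\langle a^\dagger a\rangle = g\langle A\rangle - \kappa(t)\langle a^\dagger a\rangle$ (using $\mathcal{D}^\dagger[\sigma_-]a^\dagger a=0$, $\mathcal{D}^\dagger[a]\,a^\dagger a=-a^\dagger a$) yields $\kappa(t)$. All five observables are consumed: $\langle\sigma_x\rangle,\langle\sigma_y\rangle,\langle X\otimes\sigma_z\rangle$ for $\gamma$, then $\langle\sigma_z\rangle$ for $\langle A\rangle$, then $\langle a^\dagger a\rangle$ for $\kappa$.

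The conceptual crux—and the reason the observables are chosen as they are—is that the moment hierarchy of the Jaynes--Cummings model does not close: differentiating $\langle A\rangle$ again would bring in ever higher correlations such as $\langle a^\dagger a\,\sigma_z\rangle$ and $\langle a\sigma_+ + a^\dagger\sigma_-\rangle$, generating an infinite chain. The observation that keeps the argument finite is that $\langle A\rangle$ is never differentiated; it is eliminated \emph{algebraically} between the $\sigma_z$- and number-equations once $\gamma$ is known, and $\gamma$ itself is obtained from the $\sigma_y$-equation precisely because its coherent drift reproduces the measured cross-correlation $\langle X\otimes\sigma_z\rangle$. I anticipate the main technical obstacle to be the benign but necessary non-vanishing of the denominators $\langle\sigma_y(t)\rangle$, $1+\langle\sigma_z(t)\rangle$, and $\langle a^\dagger a(t)\rangle$: I would note that for the stated initial state ($\ket{n}$ with $n\ge 1$ tensored with a generic qubit state) each is nonzero at $t=0$, hence on a dense open subset of $[0,T]$, and that because $\gamma$ and $\kappa$ are continuous (indeed low-degree Bernstein polynomials) the pointwise formulas on that dense set determine them everywhere by continuity. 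A final point to state explicitly is that the theorem treats $H$, and thus $\omega_q$ and $g$, as known, which is what licenses substituting their values into the inversion.
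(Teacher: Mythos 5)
Your proposal is correct and begins exactly where the paper's proof does --- with the Heisenberg-picture equations of motion for the listed observables; your $\langle\sigma_z\rangle$ and $\langle a^\dagger a\rangle$ equations coincide with the paper's, and your $\sigma_y$ equation differs only in the sign of the $\omega_q$ drift (a convention-level discrepancy that does not affect the argument). The substantive difference is that the paper stops where your argument starts doing real work: its proof writes down the four ODEs, remarks that the unmeasured correlators $\langle a\sigma_+\rangle,\langle a^\dagger\sigma_-\rangle$ are ``internal to the dynamics,'' and concludes that the rates ``can be algebraically inferred \ldots provided the system is fully observed,'' with the fifth observable $\langle X\otimes\sigma_z\rangle$ motivated only empirically (it ``improves identifiability'' for the Transformer). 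Your triangular elimination supplies precisely the missing mathematics: (i) the $\sigma_y$ equation closes in the measured data because its coherent drift is exactly $-g\langle X\otimes\sigma_z\rangle$, so $\gamma(t)$ is read off pointwise; (ii) the unmeasured correlator enters the $\sigma_z$ and photon-number equations only through the single Hermitian combination $\langle A\rangle$ with $A=i(a\sigma_+-a^\dagger\sigma_-)$, so once $\gamma$ is known, $\langle A\rangle$ is obtained algebraically from the $\sigma_z$ equation and substituted into the number equation to yield $\kappa(t)$ --- $\langle A\rangle$ is never differentiated, so the moment hierarchy never opens. This both proves the theorem with exactly the five stated observables (the paper's ``fully observed'' hedge effectively assumes tomographic access) and gives the fifth observable the theoretical role the paper never articulates. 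Your denominator caveat is also substantive rather than cosmetic: the theorem claims identifiability for an arbitrary fixed $\rho(0)$, yet for the main text's own initialization $\ket{n}\otimes\ket{g}$ the state remains block-diagonal in total-excitation sectors, so $\langle\sigma_x\rangle$, $\langle\sigma_y\rangle$, and $\langle X\otimes\sigma_z\rangle$ vanish identically, your formula for $\gamma$ becomes $0/0$, and pointwise inversion from the two surviving observables is underdetermined (two equations, three unknowns $\gamma,\kappa,\langle A\rangle$); hence the genericity-plus-analyticity hypothesis you impose is genuinely needed, and the paper's proof is silent on it.
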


\begin{proof}
Under the Heisenberg evolution, the observables satisfy the following coupled ODEs:
\begin{align}
\frac{d}{dt} \langle \sigma_x \rangle &= \omega_q \langle \sigma_y \rangle + i g \langle (a^\dagger - a)\sigma_z \rangle - \tfrac{\gamma(t)}{2} \langle \sigma_x \rangle, \\
\frac{d}{dt} \langle \sigma_y \rangle &= -\omega_q \langle \sigma_x \rangle - g \langle (a + a^\dagger)\sigma_z \rangle - \tfrac{\gamma(t)}{2} \langle \sigma_y \rangle, \\
\frac{d}{dt} \langle \sigma_z \rangle &= -2i g \left( \langle a \sigma_+ \rangle - \langle a^\dagger \sigma_- \rangle \right) - \gamma(t)(1 + \langle \sigma_z \rangle), \\
\frac{d}{dt} \langle a^\dagger a \rangle &= -i g \left( \langle a^\dagger \sigma_- \rangle - \langle a \sigma_+ \rangle \right) - \kappa(t) \langle a^\dagger a \rangle.
\end{align}

The terms involving $\langle a \sigma_+ \rangle$ and $\langle a^\dagger \sigma_- \rangle$ are internal to the dynamics and fully determined by the state $\rho(t)$. Therefore, $\gamma(t)$ and $\kappa(t)$ can be algebraically inferred from observable trajectories, provided the system is fully observed.

In practice, we found that using only $\langle \sigma_{x,y,z}(t) \rangle$ and $\langle a^\dagger a(t) \rangle$ leads to degraded performance in learning. To address this, we include the Hermitian cross observable $\langle X_{\text{photon}} \otimes \sigma_z \rangle$, which empirically improves identifiability and allows the Transformer to accurately recover the dissipation profiles. This confirms that, under appropriate observability conditions, the decay rates are theoretically and practically identifiable.
\end{proof}



\end{document}